\theoremstyle{definition}
\newtheorem{theorem}{Theorem}
\newtheorem{definition}{Definition}
\newtheorem{lemma}{Lemma}
\newtheorem{proposition}{Proposition}
\newtheorem{remark}{Remark}
\newtheorem{example}{Example}
\def\E{\mathbb{E}}
\DeclareMathOperator*{\argmax}{arg\,max}
\begin{document}

\title{Small-Support Approximate Correlated Equilibria}
\author{Yakov Babichenko\footnote{Center for the Mathematics of Information, California Institute of Technology. {\tt babich@caltech.edu}} \and Siddharth Barman\footnote{Center for the Mathematics of Information, California Institute of Technology. {\tt barman@caltech.edu}} \and Ron Peretz\footnote{Department of Mathematics, London School of Economics. {\tt ronprtz@gmail.com}}}

\date{}
\maketitle

\begin{abstract}
We prove the existence of approximate correlated equilibrium of support size polylogarithmic in the number of players and the number of actions per player. In particular, using the probabilistic method, we show that there exists a multiset of polylogarithmic size such that the uniform distribution over this multiset forms an approximate correlated equilibrium. Along similar lines, we establish the existence of approximate coarse correlated equilibrium with logarithmic support. 
  
We complement these results by considering the computational complexity of determining small-support approximate equilibria. We show that random  sampling can be used to efficiently determine an approximate coarse correlated equilibrium with logarithmic support. But, such a tight result does not hold for correlated equilibrium, i.e., sampling might generate an approximate correlated equilibrium of support size $\Omega(m)$ where $m$ is the number of actions per player. Finally, we show that finding an \emph{exact} correlated equilibrium with smallest possible support is \rm{NP}-hard under Cook reductions, even in the case of two-player zero-sum games.

\end{abstract}

\section{Introduction}
Equilibria are central solution concepts in the theory of strategic games. Arguably the most prominent examples of such notions of rationality are Nash equilibrium~\cite{Nash}, correlated equilibrium~\cite{aumann74,aumann87}, and coarse correlated equilibrium~\cite{hannan}. At a high level, these concepts denote distributions over players' strategy profiles where no player can benefit, in expectation, by unilateral deviation. Equilibria are used to model the outcomes of interaction between strategic human players, and between organizations run by human agents.  Hence, if a solution concept is too complicated (say, on account of the fact that it requires randomization over a large set of strategy profiles) then its applicability is debatable, simply because it is hard to imagine that human players would adopt highly intricate strategies. Such concerns have been raised in the context of bounded rationality, see, e.g.,~\cite{simon} and~\cite{rubinstein}. Therefore, studying the simplicity of these solution concepts is of fundamental importance. 

A natural measure of simplicity of equilibrium is the size of its support, i.e., the total number of strategy profiles played with nonzero probability. Support size is a pragmatic notion of simplicity, since equilibria with large support size are hard to implement and unlikely to be adopted by human players. This was pointed out early on by Lipton et al.~\cite{LMM} where they proved the existence of small-support approximate\footnote{An $\varepsilon$-approximate equilibrium, where $\varepsilon >0$, is a distribution over strategy profiles at which no player has more than an $\varepsilon$ incentive to deviate.} \emph{Nash} equilibria by generalizing the work of Alth\"{o}fer~\cite{A}.  

In this paper we extend this line of work and establish bounds on the support size of approximate correlated equilibrium and approximate coarse correlated equilibrium in large games; specifically, games with $n$ players and $m$ actions per player. 

The total number of strategy profiles in a game with $n$ players and $m$ actions per player is $m^n$, hence the support size of an equilibrium can be as large as $m^n$. But, both \emph{exact} coarse correlated equilibria and \emph{exact} correlated equilibria are relatively simple solution concepts in terms of their support size. Since correlated equilibria can be specified by $nm(m-1)$ linear inequalities (see Section~\ref{sect:prelim} for details; specifically, consider the definition in Remark \ref{rem:ce} with $\varepsilon=0$ in inequality (\ref{eq:ce-def})) there exists a correlated equilibrium with support of size $O(nm^2)$ (this support-size bound for exact correlated equilibrium appears in~\cite{GL}). Using similar arguments, we can show that there exists a coarse correlated equilibrium with support size $O(nm)$, because they are defined by $nm$ linear inequalities (see Definition~\ref{defn:cce}). Examples \ref{ex:cem2}, \ref{ex:ccem}, and \ref{ex:n} in Appendix~\ref{app:ex} show that these bounds are almost tight.

But what if we are interested in \emph{approximate} correlated equilibrium or \emph{approximate} coarse correlated equilibrium? Can the tight poly$(n,m)$ bounds be significantly improved? In this paper we show that the answer is yes. For both coarse correlated equilibrium (see Theorem \ref{theo:cce}) and correlated equilibrium (see Theorem \ref{theo:ce}) we prove that in any $n$-player $m$-action game, for any fixed $\varepsilon$, there exists an $\varepsilon$-approximate equilibrium with support size poly$(\log m, \log n)$. In fact, the small-support equilibrium, whose existence we establish, is just a uniform distribution (over a multiset of size poly$(\log m, \log n)$), and hence it has a very simple structure. 

Small-support approximate Nash equilibria have been considered in \cite{A}, \cite{LMM}, \cite{HRS}, \cite{FNS}, and \cite{BP}. The best known support-size bound for approximate Nash equilibrium is established in~\cite{BP}, wherein they prove the existence of an approximate Nash equilibrium in which  \emph{each player} uses a mixed strategy (i.e., a distribution over her actions) of support size $O(\log n + \log m)$. Since Nash equilibrium, by definition, is a product of independent distributions (one for each player), this implies the existence of an approximate Nash equilibrium with \emph{overall} support size $O((\log n + \log m)^n)$. For the case of approximate correlated equilibrium in two player games ($n=2$) this bound coincides with our upper bound of Theorem \ref{theo:ce}. In all the other cases, i.e., coarse correlated equilibrium and correlated equilibrium with more than two players, the results developed in this paper exponentially improve upon this bound (see Table~\ref{table:results}).

\begin{table}[h]
\label{table:results}
\begin{center}
\begin{tabular}{| c | c | }
\hline
$\varepsilon$-Approximate Equilibrium & Support Size \\
\hline 
 & \\[-3ex]
Nash & $O\left( \left( \frac{\log n + \log m - \log \varepsilon}{\varepsilon^2} \right)^n \right)$~\cite{BP} \\ 
 & \\[-3ex]
\hline
 & \\[-3ex] 
Correlated & $O\left( \frac{\log m ( \log m + \log n - \log \varepsilon )}{ \varepsilon^4 } \right)$ [current paper] \\ 
 & \\[-3ex]
\hline 
 & \\[-3ex]
Coarse Correlated & $O\left( \frac{\log m + \log n}{ \varepsilon^2} \right)$ [current paper] \\ 
\hline
\end{tabular}
\caption{The existence results. In every $n$-player $m$-action game there exists an $\varepsilon$-approximate equilibrium whose support size is upper bounded by the corresponding entry in the second column.}
\end{center}
\end{table}

Beyond existence, we also consider computational issues related to small-support approximate equilibria. For any fixed $\varepsilon$, we present polynomial-time algorithms for computing $\varepsilon$-approximate coarse correlated equilibrium of support size $O(\log m + \log n)$ and approximate correlated equilibrium of support size $O(m\log m + \log n)$. We also prove that finding an \emph{exact} correlated equilibrium with smallest possible support is \rm{NP}-hard under Cook reductions (see Appendix~\ref{app:hard}).

\section{Preliminaries}
\label{sect:prelim}

We consider \emph{$n$-player $m$-action games}, i.e., games with $n$ players and $m$ actions per player.\footnote{All the results in the paper also generalize to the case where each player has a different number of actions, i.e., player $i$ has $m_i$ actions. For ease of exposition, we assume throughout that all the players have the same number of actions $m$.} We use the following standard notation. The set of players is $[n]=\{1,2,...,n\}$ and the set of actions for any player $i \in [n]$ is $A_i=[m]=\{1,2,...,m\}$. The set of strategy profiles is $A=[m]^n$. Let $(a_i, a_{-i})$ denote a strategy profile in which $a_i$ is the action of the $i$th player and  $a_{-i}$ denotes the strategies chosen by players other than $i$. Players' utilities are normalized between $0$ and $1$; in particular, the payoff function of player $i$ is $u_i: A\rightarrow [0,1]$. The set of probability distributions over a set $B$ is denoted by $\Delta(B)$. The payoff function can be multilinearly extended to $u_i:\Delta(A)\rightarrow [0,1]$. That is, for probability distribution $x \in \Delta(A)$, write $u_i(x)$ to denote the expected payoff of player $i$ under $x$.  

A distribution $x\in \Delta(A)$ is called $k$\emph{-uniform} if it is the uniform distribution over a size-$k$ multiset of strategy profiles. Note that the size of the support of any $k$-uniform distribution is at most $k$.


At a high level, the idea behind the notions of \emph{correlated equilibrium}  (CE) and \emph{coarse correlated equilibrium} (CCE) is the following. Players implement some distribution $x\in \Delta(A)$, which is not necessarily a product distribution. We can interpret such a correlated implementation in terms of a \emph{mediator} that randomizes according to the distribution $x$, i.e., draws an action profile $a=(a_i)_{i\in [n]}$ from $x$. Then the mediator (privately) tells to every player $i$ the corresponding action $a_i$. We will call the drawn action $a_i$ \emph{the recommendation to player $i$}. 

A distribution $x \in \Delta(A)$ is an $\varepsilon$-\emph{coarse correlated equilibrium}\footnote{The set of coarse correlated equilibria is sometimes called the Hannan set, see, e.g.,~\cite{Young}.}  if no player can gain more than $\varepsilon$ by switching to a single pure action $j\in A_i$ instead of following the recommendation of the mediator. 

In addition, we say that a distribution $x \in \Delta(A)$ is an $\varepsilon$-\emph{correlated equilibrium} if no player can gain more than $\varepsilon$ by following any switching rule $f: A_i \rightarrow A_i$ (i.e., by switching from the recommended action $a_i$ to some other action $f(a_i)$).

More formally, we have the following definitions.

\begin{definition}
\label{defn:cce}
Write $R^i_j(a):=u_i(j,a_{-i})-u_i(a)$ to denote the regret of player $i$ for not playing $j$ at strategy profile $a$. A distribution $x\in \Delta(A)$ is an $\varepsilon$-\emph{coarse correlated equilibrium} ($\varepsilon$-CCE) if $\E_{a\sim x} [ R^i_j(a) ] \leq \varepsilon$ for every player $i$ and every action $j\in A_i$.
\end{definition}

\begin{definition}\label{defn:ce}
Write $R^i_f(a):=u_i(f(a_i),a_{-i})-u_i(a)$ to denote the regret of player $i$ for not implementing the switching rule $f$ at strategy profile $a$. A distribution $x\in \Delta(A)$ is an $\varepsilon$-\emph{correlated equilibrium} ($\varepsilon$-CE)  if $\E_{a\sim x} [ R^i_f(a) ] \leq \varepsilon$ for every player $i$ and every mapping $f: A_i \rightarrow A_i$.

\end{definition}

Note that if in the above definitions we set $\varepsilon = 0$, then we obtain the concepts of (exact) coarse correlated and correlated equilibrium. 

\begin{remark}\label{rem:ce}

There is another common definition of $\varepsilon$-correlated equilibrium (see, e.g., \cite{HM}) which requires that no player can gain more than $\varepsilon$ by changing \emph{a single} recommendation, $a_i$, to another action $j$. Formally,
\begin{align}\label{eq:ce-def}
\sum_{a_{-i}\in A_{-i}} \left(u_i(j,a_{-i})-u_i(a_i,a_{-i}) \right)x(a_i,a_{-i})\leq \varepsilon
\end{align}
for every player $i$ and every pair of actions $a_i,j\in A_i$. 

For an exact correlated equilibrium (i.e., with $\varepsilon =0$) these inequalities are satisfied if and only if Definition~\ref{defn:ce} holds (again with $\varepsilon =0$). But, for $\varepsilon >0$, this definition is \emph{not} equivalent to Definition~\ref{defn:ce} of $\varepsilon$-CE. We argue that this definition is vacuous when the number of actions per player, $m$, is large and $\varepsilon$ is a constant. To see this consider, for example, the uniform distribution $x$ over the $k$ actions $\{(j,j,...,j)_{j\in [k]}\}$, where $ 1/\varepsilon \leq k\leq m$. According to this definition $x$ is a $1/k$-correlated equilibrium, \emph{irrespective} of the payoff function. This is because the marginal probability of every action $a_i$  (i.e., $\sum_{a_{-i}} x(a_i, a_{-i})$) is at most $1/k$ and the utilities are between $0$ and $1$.

\end{remark}

\section{Coarse Correlated Equilibrium}

As a warm-up to the correlated equilibrium case, we first prove the existence of small-support $\varepsilon$-CCE.
\subsection{Existence}

\begin{theorem}\label{theo:cce}
Every $n$-player $m$-action game admits a $k$-uniform $\varepsilon$-coarse correlated equilibrium for all 
\begin{equation}
k> \left\lfloor \frac{2(\ln m + \ln n)}{\varepsilon^2} \right\rfloor
\end{equation} 
\end{theorem}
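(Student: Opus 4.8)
The plan is to use the probabilistic method: sample $k$ strategy profiles $a^{(1)}, \dots, a^{(k)}$ independently and identically from an \emph{exact} coarse correlated equilibrium $x^* \in \Delta(A)$ (which exists, e.g., as a Nash equilibrium), and show that with positive probability the uniform distribution $\hat{x}$ over the resulting multiset $\{a^{(1)}, \dots, a^{(k)}\}$ is an $\varepsilon$-CCE. For each player $i$ and action $j \in A_i$, the quantity $\E_{a \sim \hat{x}}[R^i_j(a)] = \frac{1}{k}\sum_{t=1}^k R^i_j(a^{(t)})$ is an empirical average of i.i.d.\ random variables, each bounded in $[-1,1]$ (since utilities lie in $[0,1]$), whose expectation $\E_{a \sim x^*}[R^i_j(a)] \le 0$ because $x^*$ is an exact CCE. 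So it suffices to bound the probability that this empirical average exceeds $\varepsilon$.

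The key steps, in order: (1) Fix an exact CCE $x^*$ and note $\E[R^i_j(a^{(t)})] \le 0$ for all $i,j$. (2) For a fixed pair $(i,j)$, apply Hoeffding's inequality to the sum $\sum_{t=1}^k R^i_j(a^{(t)})$ of independent $[-1,1]$-valued variables: the probability that $\frac{1}{k}\sum_t R^i_j(a^{(t)}) > \varepsilon$ is at most $\exp\!\left(-\frac{2k^2\varepsilon^2}{\sum_t (\text{range})^2}\right)$; with range $2$ per variable this is $\exp(-k\varepsilon^2/2)$. (Using the centered-at-mean version and the fact that the mean is nonpositive only helps.) (3) Union bound over all $nm$ pairs $(i,j)$: the probability that $\hat{x}$ fails to be an $\varepsilon$-CCE is at most $nm \cdot \exp(-k\varepsilon^2/2)$. (4) Check that this is strictly less than $1$ precisely when $k\varepsilon^2/2 > \ln(nm) = \ln n + \ln m$, i.e., when $k > \frac{2(\ln m + \ln n)}{\varepsilon^2}$; since $k$ is an integer, $k > \lfloor 2(\ln m + \ln n)/\varepsilon^2 \rfloor$ suffices. (5) Conclude by the probabilistic method that some realization of the multiset yields a $k$-uniform $\varepsilon$-CCE.

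I do not expect a genuine obstacle here — this is the ``warm-up'' case and the argument is a textbook application of Hoeffding plus a union bound. The only points requiring mild care are: ensuring the concentration bound is applied to the \emph{deviation above the mean} rather than above zero (so that the nonpositivity of $\E_{a\sim x^*}[R^i_j]$ is used correctly, and one does not accidentally need a two-sided bound), getting the constant in the Hoeffding exponent right given that each $R^i_j(a^{(t)}) \in [-1,1]$ has range $2$, and matching the floor in the theorem statement to the strict inequality coming from the union bound being $< 1$. A subtlety worth mentioning is that sampling from an exact CCE is what makes the mean nonpositive; sampling from an arbitrary distribution would not work, so the existence of an exact equilibrium (guaranteed by Nash's theorem, or by linear programming duality / the existence of correlated equilibria) is the input to the argument.
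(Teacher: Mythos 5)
Your proposal is correct and follows essentially the same route as the paper's proof: sample $k$ profiles i.i.d.\ from an exact CCE, apply Hoeffding's inequality (with range $2$, giving the exponent $-k\varepsilon^2/2$) to each of the $nm$ regret averages, and union bound. The points of care you flag (using the nonpositive mean one-sidedly, the range-$2$ constant, and the floor) are exactly the details the paper's argument relies on.
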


\begin{proof}
The proof is based on the probabilistic method. Let $ \sigma \in \Delta(A)$ be an exact coarse correlated equilibrium of the game; i.e., we have $\E_{a\sim \sigma} [ R^i_j(a) ] \leq 0$ for every player $i$ and every action $j\in A_i$. We sample $k$ action profiles $a(1),a(2),...,a(k) \in A$ independently at random according to the distribution $\sigma$. Denote by $s$ the uniform distribution over $a(1),a(2),...,a(k)$. 

Note that, for any player $i$ and action $j$, $R^i_j(a)$ with $a \sim \sigma$ is a random variable that takes a value between $-1$ and $1$ (since the utilities of players are between $0$ and $1$), and $\E_{a\sim \sigma} [ R^i_j(a) ] \leq 0$. Therefore by Hoeffding's inequality (see \cite{H}) we have 
\begin{align}\label{ineq:cce}
\Pr_{a(1),...,a(k)\sim \sigma} \left( \E_{a\sim s} [R^i_j(a) ] \geq \varepsilon \right)=\Pr \left(\frac{1}{k} \sum_{\ell \in [k]} R^i_j(a(\ell))\geq \varepsilon \right) \leq e^{-\frac{k\varepsilon^2}{2}}.
\end{align}

For player $i$ and action $j$, write $\mathcal{E}_{i,j}$ to denote the event: $\E_{a\sim s} [R^i_j(a) ] \geq \varepsilon $ (equivalently, $\frac{1}{k} \sum_{\ell \in [k]} R^i_j(a(\ell))\geq \varepsilon$). Inequality (\ref{ineq:cce}) implies that for $ k > \left\lfloor \frac{2(\log m + \log n)}{\varepsilon^2} \right\rfloor$ we have $\Pr ( \mathcal{E}_{i,j} ) < \frac{1}{nm}$. Note that there are $nm$ such events, one for every player $i \in [n]$ and action $j \in [m]$. Therefore, via the union bound, we get that with positive probability none of these events will happen; implying that the sampled distribution $s$ is an $\varepsilon$-CCE.
\end{proof}

\subsection{Computation}

The following proposition shows that not only can we prove the existence an $\varepsilon$-CCE with logarithmic support size, but we can efficiently determine it as well.

\begin{proposition} There exists a polynomial (in $n$ and $m$) time randomized algorithm for computing $k$-uniform $\varepsilon$-coarse correlated equilibrium for
\begin{equation}
k> \left\lfloor \frac{2(\ln m + \ln n + \ln 2)}{\varepsilon^2} \right\rfloor
\end{equation}
\end{proposition}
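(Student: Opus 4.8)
The plan is to convert the probabilistic existence argument of Theorem~\ref{theo:cce} into a Las Vegas algorithm, exploiting the fact that a candidate $k$-uniform distribution can be \emph{verified} cheaply precisely because its support is small. Concretely, I would first produce an exact coarse correlated equilibrium $\sigma$ of the game (as noted in the introduction, one exists with support $O(nm)$, and it can be obtained in polynomial time by solving the linear program given by the $nm$ inequalities of Definition~\ref{defn:cce} with $\varepsilon=0$). The algorithm then repeats the following trial: draw $k$ action profiles $a(1),\dots,a(k)$ independently from $\sigma$, let $s$ be the uniform distribution over this multiset, and test whether $\max_{i\in[n],\,j\in[m]}\frac{1}{k}\sum_{\ell\in[k]}R^i_j(a(\ell))\le\varepsilon$; if the test passes, output $s$, and otherwise start a fresh trial.

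For efficiency, note that a single trial runs in $\mathrm{poly}(n,m)$ time: sampling $k=O((\log n+\log m)/\varepsilon^2)$ profiles from an $O(nm)$-support distribution and then evaluating the $nm$ averages $\frac{1}{k}\sum_{\ell}R^i_j(a(\ell))$ (each a sum of $k$ differences of utility values) is clearly polynomial. For the success probability, the computation is identical to the proof of Theorem~\ref{theo:cce}: since $\E_{a\sim\sigma}[R^i_j(a)]\le 0$ and $R^i_j\in[-1,1]$, Hoeffding's inequality gives $\Pr(\mathcal{E}_{i,j})\le e^{-k\varepsilon^2/2}$ for every player $i$ and action $j$. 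With the stated bound $k>\left\lfloor\frac{2(\ln m+\ln n+\ln 2)}{\varepsilon^2}\right\rfloor$ we get $e^{-k\varepsilon^2/2}<\frac{1}{2nm}$, so by the union bound over the $nm$ events $\mathcal{E}_{i,j}$ the test passes with probability greater than $1-nm\cdot\frac{1}{2nm}=\frac12$. Hence the expected number of trials is less than $2$, the total expected running time is polynomial, and whenever the algorithm halts its output is a $k$-uniform $\varepsilon$-CCE by construction of the test. (If strict polynomial time is desired, one can truncate after $O(\log(1/\delta))$ trials, failing with probability at most $\delta$.)

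The one genuinely delicate step is the first one, obtaining the distribution $\sigma$ to sample from; everything after that is elementary. The role of the extra additive $\ln 2$ in the bound on $k$, relative to Theorem~\ref{theo:cce}, is exactly to push the per-trial failure probability strictly below $1/2$: with the weaker bound of Theorem~\ref{theo:cce} one only knows that a random $k$-sample from $\sigma$ fails to be an $\varepsilon$-CCE with probability $<1$, which guarantees existence but does not by itself yield a polynomial-time procedure, whereas the constant-gap version makes the resampling loop terminate in $O(1)$ expected iterations.
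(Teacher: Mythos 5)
Your algorithm and its analysis coincide with the paper's: start from an exact CCE $\sigma$, draw $k$ i.i.d.\ profiles, verify the $nm$ empirical regret averages, and resample on failure; the extra $\ln 2$ drives the per-trial failure probability below $1/2$ so the expected number of trials is at most $2$. All of that is correct and is exactly what the paper does.

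The one step where you diverge is the one you yourself flag as delicate, and there your proposed method does not deliver the claimed running time. You suggest obtaining $\sigma$ ``by solving the linear program given by the $nm$ inequalities of Definition~\ref{defn:cce} with $\varepsilon=0$.'' That LP has only $nm$ constraints, but its variables are the probabilities $x(a)$ for $a\in A=[m]^n$, i.e.\ $m^n$ variables. Solving it directly is therefore exponential in $n$, not polynomial in $n$ and $m$ as the proposition requires (the existence of an $O(nm)$-support solution does not help you \emph{find} one without confronting the exponentially large LP). The paper instead invokes the Jiang--Leyton-Brown / Papadimitriou--Roughgarden machinery to compute an exact correlated equilibrium (hence a CCE) in time polynomial in $n$ and $m$ for succinct games, and notes that for non-succinct games one can run regret-minimizing dynamics to get an $(\varepsilon/2)$-approximate CE, which costs only a constant factor in $k$. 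To close the gap you should either cite one of these procedures for producing $\sigma$, or rerun your Hoeffding computation starting from an $(\varepsilon/2)$-CCE rather than an exact one. Everything downstream of obtaining $\sigma$ in your write-up is fine.
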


\begin{proof}
If we set $k> \left\lfloor \frac{2(\ln m + \ln n + \ln 2)}{\varepsilon^2} \right\rfloor$ in inequality (\ref{ineq:cce}) then the following holds: $\Pr(\E_{a\sim s} [R^i_j(a)]\geq \varepsilon) <\frac{1}{2nm}$. This implies that the sampled distribution is an $\varepsilon$-CCE with probability at least $1/2$. 

Note that in multi-player games a coarse correlated equilibrium can be efficiently determined. In particular, we can compute a correlated equilibrium in polynomial time using the algorithm\footnote{This algorithm requires the game to be \emph{succinct}---equivalently, a black box that can compute expected utilities under given product distributions (see~\cite{JL} and~\cite{PR} for details). If the game is not succinct (i.e., such a black box does not exist), then as an alternative we can use regret-minimizing dynamics, e.g., regret matching~\cite{hart2005}, to efficiently compute an approximate CE. Starting form an approximate CE, say with approximation guarantee $\varepsilon/2$, instead of an exact CE will worsen the support-size guarantee by at most a constant factor.} of Jiang and Leyton-Brown~\cite{JL} (see also~\cite{PR}). We can then treat the computed correlated equilibrium as a coarse correlated equilibrium.

Now our randomized algorithm is direct, it starts with a coarse correlated equilibrium $\sigma$ and then it samples $k$ pure action profiles according to $\sigma$. Finally, the algorithm checks whether the uniform distribution over the samples forms an $\varepsilon$-CCE or not. If not, then it samples ($k$ action profiles) again. In expectation, after two sampling iterations the algorithm will find an $\varepsilon$-CCE.
\end{proof}

\section{Correlated Equilibrium}
In this section we establish the existence of an $\varepsilon$-CE with polylogarithmic support size. Note that in Definition~\ref{defn:ce}, an $\varepsilon$-CE is specified via $nm^m$ inequalities of the form $\E_{a\sim x} [ R^i_f(a) ]\leq \varepsilon$, one for every $i \in [n]$ and $f: A_i \rightarrow A_i$. Hence simply applying the probabilistic method, as in the case of coarse correlated equilibrium, will not give us the desired polylogarithmic bound. In particular,  sampling from an \emph{arbitrary} correlated equilibrium leads to a support-size guarantee of about $\log (nm^m) = m \log m + \log n$. We get around this issue (see Proof of Theorem~\ref{theo:ce} for details) by sampling from a particular approximate Nash equilibrium for which we only have to consider $nm^{(\log n + \log m)}$ inequalities.

\subsection{Existence}

\begin{theorem}\label{theo:ce}
Every $n$-player $m$-action game admits a $k$-uniform $\varepsilon$-correlated equilibrium for all 
\begin{equation}
k\geq \left\lfloor \frac{264\ln m(\ln m + \ln n - \ln \varepsilon + \ln 16)}{\varepsilon^4} \right\rfloor = O\left( \frac{\log m (\log m + \log n - \log \varepsilon)}{\varepsilon^4} \right)
\end{equation} 
\end{theorem}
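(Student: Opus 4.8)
The plan is to reduce the problem to sparsifying an approximate \emph{Nash} equilibrium. Put $\varepsilon' := \varepsilon/2$. First I would invoke the small-support Nash result of \cite{BP}: every $n$-player $m$-action game has an $\varepsilon'$-Nash equilibrium $\tau = \prod_{i \in [n]} \tau_i$ in which each mixed strategy $\tau_i$ is supported on at most $k' = O\!\left( \frac{\log n + \log m - \log \varepsilon}{\varepsilon^2} \right)$ actions. The point is that this product distribution is already an $\varepsilon'$-CE for free: for any player $i$ and any switching rule $f : A_i \to A_i$, using that $\tau$ is a product distribution,
\[
\E_{a \sim \tau}\left[ R^i_f(a) \right] \;=\; \sum_{a_i \in A_i} \tau_i(a_i)\bigl( u_i(f(a_i), \tau_{-i}) - u_i(a_i, \tau_{-i}) \bigr) \;\leq\; \max_{j \in A_i} u_i(j, \tau_{-i}) - u_i(\tau) \;\leq\; \varepsilon',
\]
since $u_i(f(a_i), \tau_{-i}) \leq \max_{j} u_i(j, \tau_{-i})$ for every recommendation $a_i$, $\sum_{a_i} \tau_i(a_i) u_i(a_i,\tau_{-i}) = u_i(\tau)$, and the final inequality is exactly the $\varepsilon'$-Nash condition for player $i$.

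Next I would run the probabilistic method on $\tau$ just as in the proof of Theorem~\ref{theo:cce}: draw $a(1), \dots, a(k)$ independently from $\tau$ and let $s$ be the uniform distribution over this multiset. For a fixed $(i,f)$ the numbers $R^i_f(a(\ell))$ are i.i.d.\ in $[-1,1]$ with mean at most $\varepsilon'$, so Hoeffding's inequality gives $\Pr\!\left( \E_{a \sim s}[R^i_f(a)] \geq \varepsilon \right) \leq e^{-k (\varepsilon')^2 / 2} = e^{-k \varepsilon^2 / 8}$. The essential observation --- and the reason for starting from $\tau$ rather than from an arbitrary exact correlated equilibrium --- is that we do not need a union bound over all $n m^m$ pairs $(i,f)$. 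Indeed $R^i_f(a)$ depends on $f$ only through $f(a_i)$, and whenever $\tau(a) > 0$ (hence whenever $s(a) > 0$) we have $a_i \in \mathrm{supp}(\tau_i)$, a set of size at most $k'$. Thus two switching rules that agree on $\mathrm{supp}(\tau_i)$ impose identical constraints, so it suffices to take a union bound over the at most $m^{k'}$ equivalence classes of switching rules for each player, i.e.\ over at most $n m^{k'}$ events. Hence $s$ fails to be an $\varepsilon$-CE with probability at most $n m^{k'} e^{-k \varepsilon^2 / 8}$, which is strictly less than $1$ once $k > \frac{8}{\varepsilon^2}\left( \ln n + k' \ln m \right)$; in particular a good multiset exists.

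Finally I would substitute the value of $k'$: since $k' \ln m = O\!\left( \frac{\ln m \, (\log n + \log m - \log \varepsilon)}{\varepsilon^2} \right)$ dominates $\ln n$, the requirement on $k$ becomes $k = O\!\left( \frac{\log m \, (\log m + \log n - \log \varepsilon)}{\varepsilon^4} \right)$, and tracking the absolute constant from \cite{BP} through this computation produces the stated numerical constant. I expect the only real obstacle to be the step that cuts the union bound from $n m^m$ down to $n m^{k'}$: this is precisely why one cannot simply sample from a generic correlated equilibrium (whose marginals may be supported on all $m$ actions) and must instead start from a product distribution with polylogarithmic per-player support; the short computation in the first paragraph is what guarantees that this particular starting point is itself an approximate correlated equilibrium, so the passage from Nash to correlated costs nothing beyond the factor $2$.
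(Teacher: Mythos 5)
Your proposal is correct and follows essentially the same route as the paper: start from the small-per-player-support $(\varepsilon/2)$-Nash equilibrium of~\cite{BP}, observe it is an $(\varepsilon/2)$-CE, sample $k$ profiles, and union-bound over only the $nm^{b}$ switching rules restricted to each player's support. Your explicit verification that a product $(\varepsilon/2)$-Nash equilibrium is an $(\varepsilon/2)$-CE is a detail the paper merely asserts, but otherwise the arguments coincide, including the Hoeffding exponent $e^{-k\varepsilon^2/8}$ and the final bookkeeping.
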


\begin{proof}
Let $ \sigma \in\Delta (A)$ be a distribution in which every player $i$ plays actions only from a subset $B_i\subseteq A_i$; i.e., $\sigma(a)>0$ implies $a_i\in B_i$. Then $\sigma$ is an $\varepsilon$-CE iff $\E_{a\sim \sigma}[ R^i_f(a) ]\leq \varepsilon$ for every switching rule $f:B_i\rightarrow A_i$. In other words, we can consider only switching rules $f:B_i\rightarrow A_i$ instead of $f:A_i\rightarrow A_i$, because all the recommendations to player $i$ will be in the set $B_i$. Note that, given a player $i$ and subset $B_i \subseteq A_i$, there are $m^{|B_i|}$ switching rules of the form $f: B_i \rightarrow A_i$. 

In Definition \ref{defn:ce} there are $nm^m$ inequalities ($\E_{a\sim \sigma} [ R^i_f(a) ]\leq \varepsilon$), one for every $i \in [n]$ and mapping $f: A_i \rightarrow A_i$. To avoid dealing with all these $nm^m$ inequalities, we start with an approximate correlated equilibrium $\sigma$ in which every player plays actions from a small subset, i.e., the sets $B_i$ for $\sigma$ are of small cardinality. Then, by the above argument, the number of switching rules (in other words, the number of inequalities of the form $\E_{a\sim \sigma} [ R^i_f(a) ]\leq \varepsilon$) that we need to consider  will be significantly smaller than $nm^m$. Existence of an approximate correlated equilibrium wherein each player uses only a small subset of her actions follows from a result in~\cite{BP}.

Theorem 1 in~\cite{BP} shows that in any $n$-player $m$-action game there exists an $(\varepsilon/2)$-approximate Nash equilibrium $\sigma= \Pi_i \sigma_i$ in which  each player uses a mixed strategy with support size at most $b$ where $b=\left\lceil \frac{32(\ln n + \ln m - ln \varepsilon + \ln 16)}{\varepsilon^2} \right\rceil$. That is, $| \textrm{supp}(\sigma_i) | \leq b$ for all $i$. Since $\sigma$ is an $(\varepsilon/2)$-Nash equilibrium it is an $(\varepsilon/2)$-CE as well. In addition, here, the set $B_i$ is equal to the support of player $i$'s mixed strategy. Therefore, we have $|B_i| \leq b$ for all $i$. 

We now apply the probabilistic method. We sample $k$ action profiles $a(1),a(2),...,a(k) \in A$ independently at random according to the distribution $\sigma$ and denote by $s$ the uniform distribution over the samples. For every player $i$ and a switching rule $f: B_i\rightarrow A_i$, the regret $R^i_f(a)$, with $a \sim \sigma$,  is a random variable that takes a value in $[-1, 1]$ and satisfies: $\E_{a\sim \sigma} [ R^i_f(a) ] \leq \varepsilon/2$. 

Therefore by Hoeffding's inequality (see \cite{H}) we have
\begin{equation}
\Pr \left(\E_{a\sim s} [ R^i_f(a) ] \geq \varepsilon \right)=\Pr \left(\frac{1}{k} \sum_{\ell\in [k]} R^i_f(a( \ell))\geq \varepsilon \right) \leq e^{-\frac{k\varepsilon^2}{8}}.
\end{equation}
Setting $k> \left\lfloor \frac{264\ln m(\ln n + \ln m - \ln \varepsilon + \ln 16)}{\varepsilon^4} \right\rfloor$ guarantees that $\Pr( \E_{a\sim s} [R^i_f(a)]\geq \varepsilon) <\frac{1}{nm^b}$. Since we have at most $nm^b$ such events (one for every player $i\in [n]$ and every switching rule $f: B_i\rightarrow [m]$), union bound implies that with positive probability none of them will happen. Therefore, with positive probability, $s$ is an $\varepsilon$-CE.  
\end{proof}

\subsection{Computation}

Unlike the coarse correlated equilibrium case, there is no guarantee that sampling an arbitrary correlated equilibrium polylogarithmic many times will generate an $\varepsilon$-CE with small support. This is because in the proof of Theorem~\ref{theo:ce} we sampled from a very specific approximate equilibrium; in particular, an approximate correlated equilibrium in which every player uses at most $O\left(\frac{\log n + \log m - \log \varepsilon}{\varepsilon^4} \right)$ strategies. We do not know whether such an approximate correlated equilibrium can be computed in polynomial time.  Nevertheless, we are able to compute $\varepsilon$-CE with support size $O\left(\frac{m\log m +\log n}{\varepsilon^2}\right)$, because $O\left(\frac{m\log m +\log n}{\varepsilon^2}\right)$ samples from \emph{any} correlated equilibrium are enough to form an approximate correlated equilibrium. This algorithm improves upon the know results of \cite{JL} and \cite{HM} that respectively generate an exact and approximate correlated equilibrium with support size $O(nm^2)$. 

\begin{proposition}
There exists a polynomial (in $n$ and $m$) time randomized algorithm for computing $k$-uniform $\varepsilon$-correlated equilibrium for
\begin{equation}
k> \left\lfloor \frac{2(m \ln m + \ln n + \ln 2)}{\varepsilon^2} \right\rfloor.
\end{equation} 
\end{proposition}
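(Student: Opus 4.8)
The plan is to reuse the sample-and-verify template from the coarse correlated equilibrium case: compute one correlated equilibrium $\sigma$ of the game, draw $k$ independent samples from it, form the empirical ($k$-uniform) distribution $s$, argue that $s$ is an $\varepsilon$-CE with probability bounded away from zero, and resample if the test fails. The essential point --- and the reason the bound on $k$ carries an $m\ln m$ term rather than the polylogarithmic bound of Theorem~\ref{theo:ce} --- is that here $\sigma$ is an \emph{arbitrary} correlated equilibrium with no control on the per-player supports $B_i$, so no reduction in the number of deviation rules is available and we must union-bound over all $nm^m$ switching rules directly.

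Concretely, first I would compute, in polynomial time, an exact correlated equilibrium $\sigma\in\Delta(A)$ using the algorithm of Jiang and Leyton-Brown~\cite{JL} (or, exactly as in the footnote to the preceding proposition, an $(\varepsilon/2)$-CE via regret-matching dynamics when the game is not succinct), so that $\E_{a\sim\sigma}[R^i_f(a)]\le 0$ for every player $i$ and every $f:A_i\to A_i$. Sampling $a(1),\dots,a(k)\sim\sigma$ independently and letting $s$ be uniform over them, each random variable $R^i_f(a)$ with $a\sim\sigma$ lies in $[-1,1]$ with mean at most $0$, so Hoeffding's inequality yields
\[
\Pr\!\left(\E_{a\sim s}[R^i_f(a)]\ge\varepsilon\right)\le e^{-k\varepsilon^2/2}.
\]
For $k>\big\lfloor \tfrac{2(m\ln m+\ln n+\ln 2)}{\varepsilon^2}\big\rfloor$ the right-hand side is strictly below $\tfrac{1}{2nm^m}$; since there are exactly $nm^m$ pairs $(i,f)$, a union bound gives that $s$ is an $\varepsilon$-CE with probability at least $1/2$.

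Next I would note that, although the union bound ranges over exponentially many events, the verification step is efficient: the support of $s$ has size at most $k$, hence for each player $i$ at most $m$ actions are recommended with positive probability, and for each recommended $j$ the worst deviation is found one coordinate at a time, namely $\max_f\E_{a\sim s}[R^i_f(a)]=\sum_{j}\max_{j'\in A_i}\sum_{a_{-i}}s(j,a_{-i})\big(u_i(j',a_{-i})-u_i(j,a_{-i})\big)$, which can be evaluated with a polynomial number of arithmetic operations and utility queries. The algorithm thus computes $\sigma$ once and then repeats the sample-and-test loop; by the bound above the expected number of iterations is at most $2$, so the total running time is polynomial in $n$ and $m$.

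The only real care needed --- as opposed to a substantive obstacle --- is matching the constants: one must apply Hoeffding with the full tolerance $\varepsilon$ (a starting $(\varepsilon/2)$-CE forces tolerance $\varepsilon/2$ and hence a constant-factor larger $k$, which is absorbed in the $O(\cdot)$) and count the deviation rules as exactly $nm^m$, so that the $\ln(2nm^m)=\ln 2+\ln n+m\ln m$ appearing in the exponent reproduces the stated threshold; the accompanying observation that the $\varepsilon$-CE test need not enumerate those rules is routine.
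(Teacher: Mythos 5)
Your proposal matches the paper's proof essentially step for step: compute an exact correlated equilibrium via Jiang--Leyton-Brown, draw $k$ independent samples, apply Hoeffding to each of the $nm^m$ switching-rule events with threshold $\tfrac{1}{2nm^m}$, union-bound to get success probability at least $1/2$, and verify the $\varepsilon$-CE condition efficiently by optimizing the deviation coordinate-wise over the (polynomially many) recommended actions --- exactly the argument the paper gives, including its footnoted verification procedure. The proof is correct and no further comparison is needed.
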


\begin{proof}
Let $\sigma$ be a correlated equilibrium. If we sample $k$ pure action profiles according to the distribution $\sigma$ then we have 
\begin{equation}
\Pr(\E_{a\sim s} [R^i_f(a)]\geq \varepsilon)\leq e^{-\frac{k\varepsilon^2}{2}}\leq \frac{1}{2nm^m}.
\end{equation}
for every switching rule $f:A_i\rightarrow A_i$ of every player $i$. Therefore, with probability at least $1/2$ the uniform distribution over the $k$ samples  forms an $\varepsilon$-CE.

Now the algorithm is straightforward. First compute a correlated equilibrium (for example, using the algorithm from~\cite{JL}), then sample $O\left(\frac{m\log m +\log n}{\varepsilon^2}\right)$ actions until the empirical distribution forms an $\varepsilon$-CE.\footnote{We can test in polynomial time whether a distribution with polynomial support size, say $x$, is an $\varepsilon$-CE or not. Specifically, for every player $i$ and each action $a_i \in [m] $ we can first determine $a_i' \in [m]$ that maximizes $\sum_{a_{-i} : x(a_i, a_{-i}) > 0 } (u_i (a_i', a_{-i}) - u_i(a_i, a_{-i}) ) x(a_i, a_{-i})$. Then, set $f(a_i) = a_i'$ and verify that $\E_{a\sim x}[R_f^i(a)] \leq \varepsilon$.}

\end{proof}

The following example demonstrates that if we sample from an arbitrary correlated equilibrium, then in fact we may need $O(m)$ (and not logarithmic) samples to obtain an $\varepsilon$-CE.

\begin{example}
Consider two-player matching-pennies game where in addition to the standard real actions, $r_i\in\{ -1,1 \}$, the two players also choose a dummy number $d_i\in [m]$ that is \emph{irrelevant for the payoffs}. Formally, the action set of each player $i \in [2]$ is $\{(r_i,d_i): r_i\in\{ -1,1 \} \text{ and } d_i \in [m]\}$. The payoffs are given by $u_1((r_i,d_i)_{i=1,2})=r_1 r_2=-u_2((r_i,d_i)_{i=1,2})$. 

Consider the following correlated equilibrium of the game. First we select a $d \in [m]$ uniformly at random, and then set $r_1$, and independently $r_2$,  to be $1$ or $-1$ with equal probability. Note that, for any $d\in [m]$, if we sample from this distribution then the probability that drawn action profile contains $d$---i.e., the drawn action profile is of the form $((r_i,d)_{i=1,2}))$---is equal to $1/m$.

Now, if we sample $m$ strategy profiles from this distribution, then for any $d \in [m]$ the probability that it is picked \emph{exactly} once during the sampling is $m \frac{1}{m} \cdot \left(1-\frac{1}{m} \right)^{m-1}\approx \frac{1}{e}$. If a certain $d$ was picked exactly once then both players can deduce from $d$ which action their opponent will play. Note that the expected number of $d\in [m]$ that are sampled exactly once is $\frac{m}{e}$. Moreover, the probability that the number of exactly-once-sampled $d$'s will be smaller than $\frac{m}{2e}$ is exponentially small in $m$ (see , e.g., Lemma 4 in~\cite{FM}). So, with probability exponentially close to 1, in the resulting uniform distribution at least one player may increase her payoff by at least $\frac{1}{4e}$ by reacting optimally to the opponent's known strategy in all cases in which she got the recommendation $(r_i,d)$, where $d$ was chosen exactly once. Therefore with exponentially high probability the samples does not induce an $\varepsilon$-CE for $\varepsilon <\frac{1}{4e}$. 
\end{example}

\section{Discussion}

Having established polylogarithmic bounds on the support size of approximate correlated equilibrium and approximate coarse correlated equilibrium, it is natural to ask whether these bounds are tight.

Alth\"{o}fer \cite{A} gave an example of a two-players $m$-action zero-sum game where the support of every approximate optimal strategy of one of the players is at least $\Omega(\log m)$. By considering the same game in the context of coarse correlated equilibrium (or correlated equilibrium) we can deduce that for every approximate coarse correlated equilibrium (or approximate correlated equilibrium) in the game the support size is at least $\Omega(\log m)$. Therefore, in the bound of Theorem~\ref{theo:cce} the $\log m$ term is tight. In addition, the $\log^2 m$ term is almost tight in Theorem~\ref{theo:ce}.

On the other hand, the exact dependence of the support size of approximate equilibrium on the number of players, $n$, remains open. To pinpoint this question, we consider $n$-player 2-action games (where the set of approximate correlated equilibria coincides with the set of approximate coarse correlated equilibria) and ask the following questions:

\textbf{Open Question 1:} Does there exist a family of $n$-player 2-action games $\{\Gamma_n \}_{n=1}^\infty$ along with constants $c$ and $\varepsilon$, such that for every $\varepsilon$-CE in $\Gamma_n$ the support size is at least $c\log n$?

A positive answer to this question will mean that the $\log n$ term in Theorems~\ref{theo:cce} and~\ref{theo:ce} is tight.

\textbf{Open Question 2:} Does there exist a function $s(\varepsilon)$ ($s$ does not depend on $n$), such that every $n$-player $2$-action game admits an $\varepsilon$-CE with support size at most $s(\varepsilon)$?

A positive answer to this question will mean that the support size of an approximate equilibrium can be independent of $n$.

\bibliographystyle{plain}
\bibliography{ss-ce}

\appendix

\section{Tightness of the bound by Germano and Lugosi~\cite{GL}}
\label{app:ex}

In a $n$-player $m$-action game with strategy space $A$, correlated equilibria are specified by $nm(m-1)$ linear inequalities in the affine space $\Delta(A)$. Using this fact, Germano and Lugosi~\cite{GL} proved the existence of a correlated equilibrium with support of size $nm(m-1)+1$. Along similar lines we can show that the exists a of coarse correlated equilibrium of support size $nm+1$. 

The following example demonstrates that an $m^2$ term must exist in the support-size bound for exact correlated equilibrium.

\begin{example}\label{ex:cem2}
There exists a two-player $m$-actions game with unique correlated equilibrium where each player randomizes uniformly over all her $m$ actions. This game is a $m$-action generalization of the rock-paper-scissors game, and it is presented in~\cite{N} and~\cite{V}. The support of the unique correlated equilibrium in this game is $m^2$.
\end{example}

The following example demonstrates that a factor of $m$ is unavoidable in the $O(nm)$ bound for exact coarse correlated equilibrium.

\begin{example}\label{ex:ccem}
Consider the following two-player $m$-actions zero sum game where player 1 tries to match player 2, and player 2 tries to miss match.
\begin{equation*}
    u_1(a,b) = \begin{cases}
               1               & \text{if } a=b\\
               0               & \text{otherwise.}
           \end{cases}
\end{equation*}
The payoff of player 2 is defined by $u_2(a,b)=-u_1(a,b)$.
In this game player $1$ can guarantee the value $1/m$ by playing uniformly over all her actions. For every distribution over $A$ with support of size less than $m$ player 2 can get a payoff of $0$ by playing pure strategy. Therefore, a coarse correlated equilibrium of size less than $m$ does not exist.
\end{example}

Next we construct a game in which the support size of any coarse correlated, and hence correlated, equilibrium is at least $n$. This shows that a factor of $n$ is unavoidable in the support-size bound for correlated and coarse correlated equilibrium. We will need the following proposition in order to establish the claim. 

\begin{proposition}\label{lem:sum}
Let $P=\{p_j\}_{j\in [k]}$ be a set of positive reals ($p_j>0$) that can generate all the values in $\{2^{-i}\}_{i\in [n]}$ as partial sums; i.e., there exist subsets of indexes $\{B_i\}_{i \in [n]}$ such that $\sum_{j\in B_i} p_j=2^{-i}$. Then $k\geq n$.
\end{proposition}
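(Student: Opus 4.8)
I would argue by induction on $n$. The base case $n=1$ is immediate: a subset $B_1$ with $\sum_{j\in B_1}p_j=2^{-1}>0$ cannot be empty (the empty sum is $0$), so $[k]\neq\emptyset$ and $k\ge 1$.

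For the inductive step, suppose the claim holds for $n-1$, and let $\{p_j\}_{j\in[k]}$ together with subsets $B_1,\dots,B_n$ realize the partial sums $\sum_{j\in B_i}p_j=2^{-i}$. For $B\subseteq[k]$ write $\mu(B):=\sum_{j\in B}p_j$; since every $p_j$ is positive, $\mu$ is monotone and subadditive. The first step is the elementary estimate
\[
\mu\Bigl(\bigcup_{i=2}^{n}B_i\Bigr)\le\sum_{i=2}^{n}\mu(B_i)=\sum_{i=2}^{n}2^{-i}=2^{-1}-2^{-n}<2^{-1}=\mu(B_1),
\]
from which I would conclude $B_1\not\subseteq\bigcup_{i=2}^{n}B_i$ (otherwise monotonicity of $\mu$ would give $\mu(B_1)\le\mu(\bigcup_{i\ge 2}B_i)$, contradicting the display). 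Hence there is an index $j_1\in B_1$ lying in none of $B_2,\dots,B_n$.

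Next I would delete $p_{j_1}$ and rescale: the family $\{2p_j\}_{j\in[k]\setminus\{j_1\}}$ consists of $k-1$ positive reals, and the sets $B_2,\dots,B_n$ are still contained in its index set, with $\sum_{j\in B_i}2p_j=2\cdot 2^{-i}=2^{-(i-1)}$ for $i=2,\dots,n$. Thus these $k-1$ reals realize the partial sums $2^{-1},\dots,2^{-(n-1)}$, so the induction hypothesis gives $k-1\ge n-1$, i.e.\ $k\ge n$, closing the induction.

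I do not anticipate a genuine obstacle; the only points that need care are (i) that positivity of the $p_j$ is exactly what makes $\mu$ both subadditive (used in the displayed inequality) and monotone (used to rule out $B_1\subseteq\bigcup_{i\ge 2}B_i$), and (ii) that multiplying the surviving reals by $2$ is precisely what realigns the reduced instance with the normalization $2^{-1},\dots,2^{-(n-1)}$ required by the induction hypothesis. The conceptual heart of the argument is the ``union bound'' observation that $\sum_{i\ge 2}2^{-i}<2^{-1}$, which forces some index used for the target $2^{-1}$ to be unused by all smaller targets.
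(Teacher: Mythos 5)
Your proof is correct. It takes a genuinely different route from the paper's, though both pivot on the same geometric-series fact that the smaller targets sum to strictly less than the largest one. The paper first rescales the targets to $1,2,\dots,2^{n-1}$ and then inducts on a magnitude-counting claim: for each $i$ there are at least $i$ elements of $P$ that are at most $2^{i-1}$, the step being that the $i$ elements already accounted for sum to at most $2^i-1<2^i$, so realizing the next target forces a new element into the count. You instead induct on $n$ via a deletion argument: since $\sum_{i\ge 2}\mu(B_i)<\mu(B_1)$, the set $B_1$ must own an index $j_1$ unused by $B_2,\dots,B_n$; removing $p_{j_1}$ and rescaling by $2$ reduces to the $(n-1)$-instance. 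Your version never looks at the sizes of individual elements, only at set membership, and as a byproduct it extracts a system of distinct representatives (one private index per target), whereas the paper's version yields quantitative information about how many small elements $P$ must contain. Both arguments are elementary and both generalize verbatim to any super-increasing target sequence; neither has a gap worth flagging (your rescaling step correctly realigns $B_2,\dots,B_n$ with the targets $2^{-1},\dots,2^{-(n-1)}$, and positivity is used exactly where you say it is).
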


\begin{proof}
By multiplying all the elements in $P$ by $2^n$, we have the following equivalent claim that we establish below. Let $P=\{p_j\}_{j\in [k]}$ be a set of positive reals that can generate all the values in $\{2^{i-1}\}_{i\in [n]}$ as partial sums, then $k\geq n$.

We prove by induction that for $i\leq n$ there must be at least $i$ elements in $P$ that are less than or equal to $2^{i-1}$. This statement for $i$ completes the proof.

For $i=1$, we need a partial sum equal to $1$. Therefore, in $P$ ,there must be an element $p \leq 1$.

For $i+1$, by induction we know that there exist $i$ elements, say $p_{1},p_{2},...,p_{i}$, that satisfy $p_{j}\leq 2^{j-1}$. If we sum these elements we have $\sum_{j \in [i]} p_{j} \leq 2^i-1<2^i$. Therefore, in order to have $2^i$ as a partial sum, there must be at least one additional element that satisfies $p_{i+1} <2^i$.
\end{proof}

\begin{example}\label{ex:n}
We construct a $2n$-player $2$-action game in which the support size of any correlated equilibrium is at least $n$.. In $2$-action games the set of correlated equilibria and coarse correlated equilibria coincide. Therefore the example holds for both solution concepts.

\begin{figure}[h!]
\begin{center}
\begin{tabular}{ccc}

\multicolumn{1}{l}{} & 1                          & 2                          \\ \cline{2-3} 
1                    & \multicolumn{1}{|c|}{$v,-v$} & \multicolumn{1}{|c|}{$0,0$}  \\ \cline{2-3} 
2                    & \multicolumn{1}{|c|}{$0,0$}  & \multicolumn{1}{|c|}{$1,-1$} \\ \cline{2-3} 
\end{tabular}
\end{center}
\caption{A $2$-player $2$-action game with unique correlated equilibrium}
\label{table:ex}
\end{figure}

First note the $2$-player $2$-action zero-sum game, with $v>0$, shown in Figure~\ref{table:ex} has a unique correlated equilibrium, which is the Nash equilibrium, wherein both players play the mixed strategy $(\frac{1}{v+1},\frac{v}{v+1})$.

Now consider $n$ pairs of players $(R_i,C_i)_{i\in [n]}$ who play the above game with \emph{different} parameters $v_i$, i.e., for all $i$, we replace $v$ in the above game by $v_i$. Here, the payoffs of players $R_i$ and $C_i$ do not depend on the actions of players $R_j$ and $C_j$ for $j\neq i$. 

Correlated equilibria of this game can be characterized as follows: $x$ is a correlated equilibrium iff, for all $i$, the marginals of the pairs of strategies of players $(R_i,C_i)$ is exactly

\begin{table}[h]
\begin{center}
\begin{tabular}{ccc}

\multicolumn{1}{l}{} & 1                          & 2                          \\ \cline{2-3} 
1                    & \multicolumn{1}{|c|}{$\frac{1}{(v_i+1)^2}$} & \multicolumn{1}{|c|}{$\frac{v_i}{(v_i+1)^2}$}  \\ \cline{2-3} 
2                    & \multicolumn{1}{|c|}{$\frac{v_i}{(v_i+1)^2}$}  & \multicolumn{1}{|c|}{$\frac{v_i^2}{(v_i+1)^2}$} \\ \cline{2-3} 
\end{tabular}
\end{center}
\end{table}
That is, the marginals form the unique correlated equilibrium of the game between players $R_i$ and $C_i$. In particular, the marginals over the strategies of player $R_i$ are $\left(\frac{1}{v_i+1},\frac{v_i}{v_i+1} \right)$.

Note that, by definition, marginals are (partial) sums of probabilities $\sum_{a\in B} x(a)$ with $B\subset \textrm{supp}(x)$. Set $v_i=2^i-1$, then for any correlated equilibrium, $x$, we can take partial sums of the probabilities of strategy profiles in $\textrm{supp}(x)$ to generate all values in $\left\{ \frac{1}{v_i+1} \right\}_{i \in [n]} =\{2^{-i}\}_{i\in [n]}$. Therefore, by Proposition \ref{lem:sum}, $\textrm{supp}(x)$ must contain at least $n$ elements.

\end{example}

\section{Hardness Result}
\label{app:hard}

We prove that finding a correlated equilibrium with smallest possible support is \rm{NP}-hard under Cook reductions, even in two-player zero-sum games. To accomplish this we first show that in a two-player zero-sum game finding a Nash equilibrium with minimum support size is \rm{NP}-Hard. Then, we use a correspondence between correlated equilibria and Nash equilibria in two-player zero-sum games to obtain the result.

A sparsest Nash equilibrium is a Nash equilibrium with minimum support size. In the following lemma we reduce exact cover by $3$ sets, a problem known to be \rm{NP}-hard (see~\cite{gary}), to the problem of finding a sparsest Nash equilibrium.

\begin{lemma}
\label{lem:ne-hard}
Given a two-player zero-sum game, finding a sparsest Nash equilibrium is \rm{NP}-hard under Cook reductions. 
\end{lemma}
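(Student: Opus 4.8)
The plan is to reduce from Exact Cover by 3-Sets (X3C): given a ground set $U$ with $|U| = 3q$ and a family $\mathcal{S} = \{S_1, \dots, S_N\}$ of three-element subsets of $U$, decide whether there is a subfamily of exactly $q$ sets that partitions $U$. I would build a two-player zero-sum game whose sparsest Nash equilibria correspond to exact covers, so that the minimum support size is small (essentially $2q$, or a controlled function of $q$) if and only if an exact cover exists. The natural construction is to let the row player's actions be the sets $S_1, \dots, S_N$ (plus possibly a few auxiliary actions), the column player's actions be the elements of $U$ (plus auxiliary actions), and design the payoff matrix so that the value of the game is achieved only by row strategies that "cover" every element with not-too-much overlap. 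Concretely, a $0/1$ incidence-type matrix $M$ with $M_{i,u} = 1$ iff $u \in S_i$, combined with a uniform column strategy, forces any optimal row mixed strategy $p$ to satisfy $\sum_{i : u \in S_i} p_i \ge 1/q$ for every $u$; summing over $u$ and using $|S_i| = 3$ gives $3 = \sum_i 3 p_i = \sum_u \sum_{i: u \in S_i} p_i \ge 3q \cdot (1/q) = 3$, so equality holds everywhere, which pins the support of $p$ down to a collection of sets each of whose probability mass is ``tight'' --- forcing an exact cover when the support is minimized. I would tune the auxiliary actions and payoffs so that (i) the game value is exactly what this argument requires, (ii) the column player's best responses do not collapse the structure, and (iii) an exact cover of size $q$ yields a Nash equilibrium of support size exactly some target $t(q)$, while the absence of an exact cover forces support strictly larger than $t(q)$.

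First I would fix the exact payoff matrix and verify the value of the game by exhibiting a pair of optimal strategies (uniform over $U$ for the column player, uniform over the sets of a cover for the row player, when a cover exists), using the minimax theorem and linear-programming duality to argue the value is as claimed regardless of whether a cover exists. Second, I would prove the structural lemma: any optimal row strategy $p$ must satisfy the covering inequalities with equality, hence $\mathrm{supp}(p)$ induces a fractional exact cover; then argue that a fractional exact cover of minimum support has support size $\ge q$, with equality iff it is an integral exact cover. This uses a counting / LP-extreme-point argument --- a basic feasible solution of the covering polytope $\{p \ge 0 : \sum_{i : u \in S_i} p_i = 1/q \ \forall u,\ \sum_i p_i = 1\}$ has at most $3q+1$ nonzeros, but more importantly any support of size exactly $q$ among three-element sets covering $3q$ elements with equal total mass must be a partition. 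Third, I would assemble the reduction: compute a sparsest Nash equilibrium of the constructed game (this is where the Cook reduction --- solving an optimization oracle --- is invoked), read off its support size, and decide X3C by comparing against the threshold $t(q)$.

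The Cook-reduction (Turing-reduction) framing matters because "find a sparsest NE" is an optimization problem and we only need to call it once (or a polynomial number of times) on the constructed instance and then do a polynomial-time comparison; I would emphasize that the construction is polynomial-time computable and that verifying the NE property and reading support size are polynomial. The main obstacle I anticipate is step two's rigidity: making sure that \emph{every} optimal row strategy (not just the nice one) is forced to have the covering structure, and that no "cheating" equilibrium of small support arises from the interaction of the auxiliary actions or from the column player mixing in an unintended way. This is where the payoff values need to be chosen carefully --- for instance, giving auxiliary row actions a strictly dominated-in-equilibrium status, or adding a small penalty structure so that any support-minimizing equilibrium cannot exploit slack. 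A secondary subtlety is handling the case where $q$ does not divide things evenly or where $\mathcal{S}$ contains degenerate sets; standard preprocessing of the X3C instance (which preserves NP-hardness) removes these. Once the rigidity lemma is in place, the equivalence "sparsest NE has support $\le t(q)$ $\iff$ X3C is a yes-instance" follows directly, completing the reduction.
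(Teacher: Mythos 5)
Your proposal matches the paper's proof in all essentials: the same reduction from X3C to a two-player zero-sum ``sets versus elements'' incidence game, the same rigidity argument that every equilibrium row strategy's support must cover the universe (hence has size at least $q=|U|/3$, with equality precisely when an exact cover exists), and the same single-oracle-call Cook reduction reading off the support size. The only cosmetic differences are that the paper uses $\pm 1$ payoffs so the covering property follows immediately from the column player's negative equilibrium payoff (rather than your $0/1$ counting argument), and no auxiliary actions or payoff tuning turn out to be needed.
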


\begin{proof}
In the exact cover by $3$ sets  problem (X3C) we are given a universe of elements $J$ and a collection, $\mathcal{I} = \{ S_i \}_{i \in [m]}$, of $3$-element subsets of $J$ and the goal is to determine if there an exact cover of $J$, i.e., a subcollection $\mathcal{I}' \subseteq \mathcal{I}$ such that every element of $J$ is contained in exactly one member of $\mathcal{I}'$. At a high level, given an X3C instance, we construct a two-player zero-sum game in which the first player picks a set in the collection $\mathcal{I}$ and the second player picks an element from $J$. The goal of the first player is to select a set that covers the second player's element and, since it is a zero-sum game, the second player wants to avoid getting covered. Formally, the action sets of the players are $[m]$ and $[n]$ respectively, where $n = |J|$ and the utilities are as follows: $u_1(i,j) = -u_2(i,j) = 1 $ if $j \in S_i$, else if $j \notin S_i$, $u_1(i,j) = -u_2(i,j) = -1$.

Write $(\sigma^*_1, \sigma_2^*)$ to denote a sparsest Nash equilibrium of the game. Here $\sigma_1^*$ and $\sigma_2^*$ are the mixed strategies of the first and second player respectively. Below we prove that $|\textrm{supp}(\sigma_1^*)| = n/3$ iff the given X3C instance has an exact cover. Hence if we are given a sparsest Nash equilibrium we can efficiently determine (by looking at the support size of the mixed strategy of the first player) whether the given X3C instance has an exact cover or not. This completes the reduction. 

We assume that for all $j \in J$ there exists a set $S_i \in \mathcal{I}$ such that $j \in S_i$, else the problem is trivial. Therefore, the value of the game is positive: player one can guarantee a payoff of at least $1/m$ by playing the uniform distribution over $[m]$. Since the value of the game is positive, the second player receives a negative payoff at any Nash equilibrium. Using this we can show that for \emph{any} Nash equilibrium $(\sigma_1, \sigma_2)$ the sets whose index is in the support of $\sigma_1$ cover $J$: $\cup_{ i \in \textrm{supp}(\sigma_1)}  S_i = J $. If this is not the case, i.e., there exists an element $j \in J$ that is not covered by $\cup_{ i \in \textrm{supp}(\sigma_1)}  S_i$, then the second player can play the pure action corresponding to $j$ and get a payoff of $1$, which contradicts the fact that the second player receives a negative payoff at any Nash equilibrium. 

Note that the subsets in $\mathcal{I}$ are of cardinality three, therefore any cover of $J$ must contain at least $n/3$ subsets. This implies that $|\textrm{supp}(\sigma^*_1)| \geq n/3$. This inequality holds regardless of the existence of an exact cover. In particular, if the X3C instance does not have an exact cover then $|\textrm{supp}(\sigma_1^*)| > n/3$.

On the other hand, if the given X3C instance has an exact cover then $|\textrm{supp}(\sigma_1^*)| = n/3$. To show this we first consider the following Nash equilibrium: the mixed strategy of the first player is the uniform distribution over the exact cover and the mixed strategy of the second player is the uniform distribution over $[n]$. Since in two-player zero-sum games mixed strategies of Nash equilibria are interchangeable, we get that $|\textrm{supp}(\sigma_1^*)| = n/3$. Overall, this establishes the desired claim that $|\textrm{supp}(\sigma_1^*)| = n/3$ iff the X3C instance has an exact cover. 
\end{proof}

Note that in a two-player zero-sum game,  $(\sigma_1, \sigma_2)$ is a Nash equilibrium iff the mixed strategies $\sigma_1$ and $\sigma_2$ are optimal strategies\footnote{We work with the standard maxmin definition of optimal strategies. Specifically, $\sigma_i$ is said to be an optimal strategy for player $i$ if it satisfies: $\sigma_i \in \argmax_{ x_i  \in \Delta([m])} \min_{x_{-i} \in \Delta([m])} u_i(x_i, x_{-i})  $.} of the two players. Hence, Lemma~\ref{lem:ne-hard} implies that finding a sparsest optimal strategy, say for the first player, is \rm{NP}-hard.


We now prove the hardness of finding a sparsest (i.e., one with minimum support size) correlated equilibrium.  
\begin{theorem}
\label{thm:ce-hard}
Given a two-player zero-sum game, finding a sparsest correlated equilibrium is \rm{NP}-hard under Cook reductions. 
\end{theorem}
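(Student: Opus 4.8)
The plan is to Cook-reduce the problem of finding a \emph{sparsest Nash equilibrium} of a two-player zero-sum game---shown $\mathrm{NP}$-hard in Lemma~\ref{lem:ne-hard}, where in fact the sparsest optimal strategy of player~1 in the game $G$ built from an X3C instance has support exactly $n/3$ iff the instance has an exact cover (and strictly more otherwise)---to the problem of finding a sparsest correlated equilibrium. The bridge is a standard structural fact that I would establish first: in any two-player zero-sum game with value $v$, if $x$ is a correlated equilibrium then its marginal $x_1$ on player~1 is an optimal (maxmin) strategy for player~1, its marginal $x_2$ is optimal for player~2, and $u_1(x)=v$. This follows at once from Definition~\ref{defn:ce}: applying the correlated-equilibrium inequality to the constant switching rule $f\equiv b$ for player~1 gives $u_1(b,x_2)\le u_1(x)$ for every pure action $b$, hence $\max_b u_1(b,x_2)\le u_1(x)$; combining this with the symmetric inequality for player~2 and the minimax theorem pins down $u_1(x)=v$ and forces both $x_1$ and $x_2$ to be optimal. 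The consequence I need is the lower bound $|\textrm{supp}(x)|\ge |\textrm{supp}(x_1)|\ge k_1^*$, where $k_1^*$ is the smallest support size of any optimal strategy of player~1.

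Next I would observe that this lower bound is matched from above whenever \emph{player~2 has a pure optimal strategy} $\delta_{j^*}$: in that case the product $\sigma_1^*\otimes\delta_{j^*}$ of a sparsest optimal strategy of player~1 with $\delta_{j^*}$ is a Nash equilibrium, hence a correlated equilibrium, of support size exactly $k_1^*$. So in such games the sparsest correlated equilibrium has support size \emph{precisely} $k_1^*$, the quantity Lemma~\ref{lem:ne-hard} ties to the existence of an exact cover.

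It remains to make player~2 have a pure optimal strategy without disturbing the reduction of Lemma~\ref{lem:ne-hard}. Given an X3C instance (output ``no'' unless $3\mid n$), I would build the zero-sum game $G$ of Lemma~\ref{lem:ne-hard}, compute its value $v$ in polynomial time by linear programming, and then form the augmented game $G'$ by giving player~2 one extra action ``$\mathrm{pass}$'' with payoffs $u_1(i,\mathrm{pass})=v=-u_2(i,\mathrm{pass})$ for every action $i$ of player~1. One verifies that $v(G')=v$, that $\mathrm{pass}$ is a pure optimal strategy for player~2 in $G'$, and that the set of optimal strategies of player~1 is unchanged, so $k_1^*(G')=k_1^*(G)$. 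By the previous paragraph the sparsest correlated equilibrium of $G'$ has support size $k_1^*(G)$, which by Lemma~\ref{lem:ne-hard} equals $n/3$ iff the X3C instance has an exact cover. Hence an oracle for sparsest correlated equilibrium, queried once on $G'$, decides X3C after polynomial postprocessing (reading off the support size, or equivalently the support of the player-1 marginal), yielding the Cook reduction and the theorem.

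The main obstacle---and the reason the ``$\mathrm{pass}$'' gadget is required---is that a correlated equilibrium of a zero-sum game constrains \emph{both} marginals to be optimal, so a priori the sparsest correlated equilibrium could be as large as $\max(k_1^*,k_2^*)$ (with $k_2^*$ the analogous quantity for player~2), or, because of the additional correlation constraints on the joint distribution, might not even be attained at $\max(k_1^*,k_2^*)$; the augmentation collapses player~2's side to a single action so that the sparsest correlated equilibrium exactly tracks $k_1^*$. A minor technical point is that the reduction must compute $v$ \emph{exactly}---it is rational with polynomially bounded bit-complexity, so an exact LP solve suffices---so that adding ``$\mathrm{pass}$'' neither raises nor lowers the value of the game.
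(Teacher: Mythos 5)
Your proof is correct, but it takes a genuinely different route from the paper's. The paper keeps the game from Lemma~\ref{lem:ne-hard} unchanged and gives a generic Cook reduction from sparsest Nash equilibrium to sparsest correlated equilibrium in \emph{any} two-player zero-sum game: it invokes a result of Forges that every \emph{conditional} distribution $\pi\mid a_1$, $\pi\mid a_2$ of a correlated equilibrium is an optimal strategy, combines this with interchangeability to show that a sparsest CE $\pi^*$ has support size exactly $r_1 r_2$ (the product of the minimal optimal-strategy support sizes), and then reads off a sparsest Nash equilibrium by conditioning $\pi^*$ on any action in its support. You instead reduce directly from X3C, use only the weaker and more elementary fact that the \emph{marginals} of a CE are optimal (which, as you note, already follows from the constant switching rules, i.e.\ from the CCE inequalities), and engineer the game with the ``pass'' gadget so that player~2 has a pure optimal strategy and the sparsest-CE support size collapses to $k_1^*$. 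The trade-off: the paper's argument needs the stronger structural result of Forges but requires no modification of the game and yields a clean black-box reduction between the two sparsity problems; yours is more self-contained and pins the oracle's answer to exactly $n/3$ versus strictly more, at the cost of the augmentation step and an exact LP solve for the value $v$ (which you correctly observe is rational of polynomial bit-length, so this is not an obstacle). Both are valid single-query Cook reductions, and your verification that the gadget preserves the value and player~1's optimal set is sound.
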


\begin{proof}
Let $\pi$ be a correlated equilibrium of a two-player zero-sum game. It is shown in~\cite{forges} that for any action $a_2$ of the second player such that $\pi(a_2) > 0$ (i.e., $a_2$ is played with positive probability), the conditional probability distribution $\pi \mid a_2$ over the first player's actions is an optimal  strategy for the first player.  We have the same result for actions $a_1$ of the first player (with $\pi(a_1) > 0$)  and conditional probability distribution $\pi \mid a_1$. Therefore, $(\pi \mid a_2, \pi \mid a_1)$ is a Nash equilibrium.

Write $\pi^*$ to denote a sparsest correlated equilibrium of the given game. Also, let $\sigma^* = (\sigma_1^*, \sigma_2^*)$  be a sparsest Nash equilibrium of the game and $r_i = |\textrm{supp}(\sigma_i^*)|$ for $i \in \{1,2\}$. Since $\sigma^*$ is a correlated equilibrium as well, we have 
\begin{align}
\label{ineq:supp}
|\textrm{supp}(\pi^*)| & \leq r_1 r_2. 
\end{align}

In two-player zero-sum games mixed strategies of Nash equilibria are interchangeable; therefore, for any Nash equilibria of the game, $(\sigma_1, \sigma_2)$, we have $ |\textrm{supp}(\sigma_1)| \geq r_1 $ and $ |\textrm{supp}(\sigma_2)| \geq r_2$. In particular, $| \textrm{supp}(\pi^* \mid a_2)| \geq r_1$ and $|\textrm{supp}(\pi^* \mid a_1)| \geq r_2 $ for any two actions $a_1$ and $a_2$ that are played with positive probability.  Therefore, inequality (\ref{ineq:supp}) is tight and we have $| \textrm{supp}(\pi^* \mid a_2)| = r_1$ along with $|\textrm{supp}(\pi^* \mid a_1)| = r_2 $.

The above stated property implies that, given $\pi^*$, we can efficiently determine a sparsest Nash equilibrium.  In particular, let $a_1$ ($a_2$) be an action of the first (second) player such that $\pi^*(a_1) > 0$ ($\pi^*(a_2) > 0$), then ($\pi^* \mid a_2$, $\pi^* \mid a_1$) is a sparsest Nash equilibrium. Overall, using Lemma~\ref{lem:ne-hard} we get the desired result. 
\end{proof}

\end{document}